\newtheorem{theorem}{Theorem}[section]
\newtheorem{corollary}[theorem]{Corollary}%
\newtheorem{definition}[theorem]{Definition}
\newtheorem{proposition}[theorem]{Proposition}
\newtheorem{conjecture}[theorem]{Conjecture}
\newtheorem{problem}[theorem]{Problem}
\newcommand\C{{\mathbb{C}}}
\newcommand\AR{{\utilde{\mathbf{R}}}}
\newcommand\F{{\mathbb{F}}}
\newcommand\sF{{\mathcal{F}}}
\newcommand{\R}{\mathbf{R}}
\newcommand\Z{{\mathbb{Z}}}
\newcommand\Q{{\mathbb{Q}}}
\newcommand\N{{\mathbb{N}}}
\newcommand\poly{{\mathop\textup{poly}}}
\newcommand\chr{{\mathop\textup{char}}}
\newcommand\remove[1]{{}}
\title{A stronger connection between the asymptotic rank conjecture and the set cover conjecture}
\author{Kevin Pratt\thanks{Department of Computer Science, Courant Institute of Mathematical Sciences, New York University.}}
\begin{document}

\maketitle

\begin{abstract}
We give a short proof that Strassen's asymptotic rank conjecture implies that for every $\varepsilon > 0$ there exists a $(3/2^{2/3} + \varepsilon)^n$-time algorithm for set cover on a universe of size $n$ with sets of bounded size. This strengthens and simplifies a recent result of Bj\"orklund and Kaski that Strassen's asymptotic rank conjecture implies that the set cover conjecture is false. From another perspective, we show that the set cover conjecture implies that a particular family of tensors $T_n \in \C^N \otimes \C^N \otimes \C^N$ has asymptotic rank greater than $N^{1.08}$. Furthermore, if one could improve a known upper bound of $\frac{1}{2}8^n$ on the tensor rank of $T_n$ to $\frac{2}{9 \cdot n}8^n$ for any $n$, then the set cover conjecture is false.
\end{abstract}

\section{Introduction}
In a recent preprint \cite{bk}, Bj\"orklund and Kaski have shown that the \emph{set cover conjecture} \cite{cygan,krauthgamer2019set} from the area of exact exponential algorithms, and Strassen's \emph{asymptotic rank conjecture} \cite{strassen1994algebra}, a broad generalization of the conjecture that the exponent of matrix multiplication $\omega$ equals 2, are inconsistent with each other. The purpose of this note is to give a simpler and quantitatively informative proof of this result. 

We begin by recalling the relevant problems.

\begin{problem}[$s$-Set Cover]\label{prob:setcov}
Given $t \in \N$ and  $\sF \subseteq 2^{[n]}$ where each $X \in \sF$ has size at most $s$, decide if there exist at most $t$ sets in $\sF$ whose union equals $[n]$.
\end{problem}
\begin{conjecture}[Set Cover Conjecture \cite{cygan,krauthgamer2019set}]\label{conj:scc}
	For every $\varepsilon > 0$ there exists $s \in \N$ such that there is no (possibly randomized) algorithm for $s$-set cover with runtime $2^{(1-\varepsilon)n}$.
\end{conjecture}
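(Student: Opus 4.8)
The statement in question is the Set Cover Conjecture, which is a \emph{hardness hypothesis} rather than a theorem reachable by today's techniques: an unconditional proof would have to exhibit, for some fixed $\varepsilon>0$ and every $s\in\N$, an exponential time lower bound $2^{(1-\varepsilon)n}$ for $s$-Set Cover in a general uniform model, and no current method proves unconditional exponential lower bounds of this kind. A realistic plan therefore has two components: confirm that the quantifier structure is genuinely needed, and assemble the conditional (reduction-based) evidence. For the first, the nesting ``$\forall\varepsilon\,\exists s$'' is essential, since for small $s$ the problem is easier --- $2$-Set Cover amounts to computing a minimum edge cover (equivalently, a maximum matching) and so lies in polynomial time --- so the witness $s=s(\varepsilon)$ must be allowed to grow as $\varepsilon\to 1$, and any proof must exploit sets of unbounded size.

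For the conditional evidence, one would try to derive the conjecture from a more primitive fine-grained hypothesis. The template is a split-and-list reduction: from a $k$-CNF instance on $n$ variables, enumerate partial assignments to one half of the variables, record for each the set of clauses it fails to satisfy, and encode these clause-sets into an $s$-Set Cover instance --- using a finer, hence larger-width, encoding as the clause structure grows more intricate --- so that a small cover corresponds to a pair of partial assignments that jointly satisfy the formula; a $2^{(1-\varepsilon)n}$-time set cover algorithm would then beat brute force on $k$-CNF-SAT for all $k$, i.e.\ refute SETH. The \textbf{main obstacle} is that this program is presently stuck: every known embedding of a SAT-like (or Orthogonal-Vectors-like, or direct covering) instance into set cover inflates the universe by a constant factor $c>1$, which destroys the exponent, and --- correspondingly --- it is not even known that SETH implies the Set Cover Conjecture. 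One needs a reduction that is simultaneously \emph{lossless in the universe size} (blow-up $1+o(1)$) and forces $s\to\infty$, and no such reduction is known; lacking that, and lacking any unconditional technique, the conjecture stays open.

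Accordingly, the honest plan --- and the one carried out in the rest of this paper --- is not to settle the Set Cover Conjecture but to \emph{relate} it: we show it is incompatible with Strassen's asymptotic rank conjecture, equivalently that it forces an explicit tensor family $T_n\in\C^N\otimes\C^N\otimes\C^N$ to have asymptotic rank exceeding $N^{1.08}$, and that it would be refuted by improving the known rank upper bound $\tfrac12 8^n$ for $T_n$ to $\tfrac{2}{9n}8^n$ for even a single $n$. A proof of the conjecture is thereby reduced to one of these concrete tensor-rank questions.
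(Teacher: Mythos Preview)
Your assessment is correct and matches the paper: the statement is a \emph{conjecture}, and the paper offers no proof of it whatsoever --- it is stated purely as a hypothesis to be related to the asymptotic rank conjecture. There is nothing in the paper to compare a proof attempt against.

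One small logical slip in your final sentence: you write that ``a proof of the conjecture is thereby reduced to one of these concrete tensor-rank questions.'' The direction is backwards. The paper shows that the Set Cover Conjecture \emph{implies} the tensor-rank lower bounds (Corollaries on $\AR(T_k)$ and $\R(T_k)$), and contrapositively that sufficiently good tensor-rank \emph{upper} bounds would \emph{refute} the Set Cover Conjecture. But proving those tensor-rank lower bounds would not establish the Set Cover Conjecture --- the implication only goes one way. So the paper gives a route to \emph{disproving} the conjecture via tensor rank, not to proving it.
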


Via standard reductions we will be interested in the following algorithmic problem.

\begin{problem}[Balanced Tripartitioning]\label{prob:bst}
Given $\sF_1, \sF_2, \sF_3 \subseteq \binom{[3n]}{n}$, decide if there exist $S_i \in \sF_i$ with $S_1 \cup S_2 \cup S_3 = [3n]$.
\end{problem}

This can be solved in time $8^n \cdot \poly(n)$ as follows. Let $f_i : \Z_2^{3n} \to \{0,1\}$ be the indicator function of $\sF_i$. Note that $(f_1 * f_2 * f_3)(1^{3n})>0$ if and only if the answer to \Cref{prob:bst} is ``yes" (here $(f * g)(x) := \sum_{y \in \Z_2^{3n}} f(y)g(x-y)$ denotes the convolution of $f$ and $g$). Using Fourier inversion in $\Z_2^{3n}$, we can compute this quantity in $8^n \cdot \poly(n)$ time. We will see that improving significantly on this simple algorithm would refute \Cref{conj:scc} (this fact is essentially shown in \cite{bk} as well).

 Clearly \Cref{prob:setcov} requires at least $\binom{3n}{n} \approx 6.75^n$ time. We show that one can nearly achieve this (that is, solve \Cref{prob:setcov} in time $(6.75 + \varepsilon)^n$ for any $\varepsilon > 0$) if the asymptotic rank conjecture holds. This improves on the main result in \cite{bk}, which roughly showed that \Cref{prob:bst} can be solved in time $7.999^n$ under this assumption.  Similar to \cite{bk}, we will obtain algorithms for \Cref{prob:bst} from algorithms for evaluating certain trilinear forms, or \emph{tensors}. These will in turn arise from hypothetical bounds on the \emph{asymptotic rank} of certain ``base" tensors. Our simplification is due to the fact that the trilinear forms we will be interested in are more symmetric than those in \cite{bk}. This is also related to the fact that the main algorithmic problem studied in \cite{bk} is an ``almost balanced" tripartitioning problem, rather than the ``exactly balanced" \Cref{prob:bst}.

\subsection{Background on asymptotic tensor rank}
We refer the reader to \cite{blaser2013fast} and Chapters 14 and 15 of \cite{burgisser2013algebraic} for background on tensor rank and its applications, but we quickly recap the relevant notions here. Throughout $\F$ denotes an arbitrary field. For our purposes one may assume without loss of generality that $\F = \Q$ or $\F= \Z_p$ (this will be due to \cite[Section 15.3]{burgisser2013algebraic}). We assume access to a $\poly(n)$-time algorithm for arithmetic with $n$-bit elements in the prime field of $\F$. By a tensor we mean a trilinear form $T: \F^n \times \F^n \times \F^n \to \F$, i.e., a cubic set-multilinear polynomial in three disjoint sets of $n$ variables. We say that $T$ has \emph{dimension} $n$. The \emph{support} of $T$ is the set of monomials appearing with nonzero coefficient (with respect to some implicitly chosen basis). We say that $T$ is \emph{concise} if for all choices of bases, every variable is contained in some monomial in the support of $T$. We say that $T$ is \emph{tight} if for some choice of bases of $\F^n$, there exist injective functions $f, g, h : [n] \to \Z$ such that $f(i)+g(j)+h(k) = 0$ for all $(i,j,k)$ in the support of $T$.\footnote{We encourage the reader to ignore the notion of tightness; it is only needed to state Strassen's original conjecture. The more basic question of whether \emph{all} tensors have minimal asymptotic rank \cite[Problem 15.5]{burgisser2013algebraic} is open.}

The following tensors are key.
\begin{definition}
	Let
	\[T_k := \sum_{\substack{S, T, U \in \binom{[3k]}{k}\\ S \cup T  \cup U = [3k]}}X_S Y_T Z_U.\]
\end{definition}


\begin{definition}
Let $T$ be a tensor. We say that $T$ has rank one if $T = \sum a_i X_i \sum b_i Y_i \sum c_iZ_i$ for some $a,b,c \in \F^n$. The tensor rank of $T$, denoted $\R(T)$, is the minimum number $r$ such that $T$ can be expressed as an $\F$-linear combination of $r$ rank-one tensors.
\end{definition}
To orient the reader, it is not difficult to show that every tensor has rank at most $O(n^2)$, and that ``most" tensors meet this bound (see \cite[Theorem 20.9]{burgisser2013algebraic} for precise bounds). It is also not difficult to see that if $T$ is concise, then $\R(T) \ge n$.

\begin{definition}
The Kronecker product of $T = \sum_{i,j,k \in [n]} a_{ijk}X_iY_jZ_k $ and $T' = \sum_{i,j,k \in [n]} b_{ijk}X_iY_jZ_k$ is the trilinear form $T \otimes T' : \F^{n^2} \times \F^{n^2} \times \F^{n^2} \to \F$ given by
\[T \otimes T' := \sum_{i,i',j,j',k,k' \in [n]} a_{ijk}b_{i'j'k'}X_{i,i'}Y_{j,j'}Z_{k,k'}.\]
We use the shorthand $T^{\otimes r}$ to denote the $r$-fold Kronecker product of $T$ with itself.
\end{definition}
Note that $\R(T \otimes T') \le \R(T) \R(T')$. As a result, the following is well-defined.
\begin{definition}
$\AR(T) := \lim_{r \to \infty} \R(T^{\otimes r})^{1/r}$ is the \emph{asymptotic rank} of $T$.
\end{definition}
The significance of asymptotic rank is that it that it characterizes the asymptotic complexity of computing high Kronecker powers of $T$. More formally, if $T$ is concise then $\AR(T) \le x$ implies that for every $\varepsilon > 0$, one can compute $T^{\otimes r}$ using $O((x+\varepsilon)^r)$ arithmetic operations.\footnote{This is not true of (non-asymptotic) tensor rank, which only characterizes multiplicative complexity.}  This can be understood as a generalization of the relevance of tensor rank to fast matrix multiplication, and follows from a recursive algorithm analogous to Strassen's algorithm \cite{strassen1969gaussian} and all subsequent improvements thereon.  Furthermore, this is optimal, as the computation of a trilinear form $T$ requires $\Omega(\R(T))$ multiplications \cite[Corollary 6]{baur1983complexity}.\footnote{``Rank" can be replaced by ``border rank" and all of our results still hold, because asymptotic rank and asymptotic border rank are equivalent \cite[Lemma 15.27]{burgisser2013algebraic}.} 

\begin{conjecture}[Asymptotic Rank Conjecture \cite{strassen1994algebra}]\label{conj:arc}
For every tensor $T : \F^n \times \F^n \times \F^n \to \F$ that is tight and concise, $\AR(T) = n$.
\end{conjecture}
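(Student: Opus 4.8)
The inequality $\AR(T)=n$ splits into two halves of utterly different difficulty. The lower bound $\AR(T)\ge n$ needs nothing beyond conciseness: conciseness is preserved under Kronecker products (if every variable occurs in some support monomial of $T$ and of $T'$, then every variable of $T\otimes T'$ occurs in a support monomial), so $T^{\otimes r}$ is concise of dimension $n^r$, whence $\R(T^{\otimes r})\ge n^r$ by the remark recalled above, and taking $r$-th roots gives $\AR(T)\ge n$. All the content is in the reverse bound $\AR(T)\le n$. This is the famous open part: it is not known even for dimension-$2$ tensors, e.g. the $W$-tensor $X_1Y_1Z_2+X_1Y_2Z_1+X_2Y_1Z_1$, which is tight and concise but for which $\AR = 2$ is open. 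Conciseness alone cannot give it ($W$-type tensors are concise yet do not admit $\langle n\rangle$ as a restriction), so the proof must genuinely use tightness.

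The only known framework is the asymptotic spectrum of tensors: $\AR(T)=\max_{\phi}\phi(T)$ over the (unknown) set of ``spectral points'' $\phi$, so proving $\AR(T)\le n$ for tight concise $T$ means showing $\phi(T)\le n$ for \emph{every} such $\phi$. I would split this into (a) verifying it for the currently known spectral points --- the gauge point, Strassen's support functionals $\zeta^{\theta}$ (built from the supports of $T$ over all choices of bases), and the quantum functionals of Christandl--Vrana--Zuiddam --- and (b) the genuinely open step of either showing these are all the spectral points in the tight concise case or producing a direct construction. For (a), tightness is the handle: fixing bases so that the support of $T$ lies in a diagonal $\{(i,j,k):f(i)+g(j)+h(k)=0\}$ of $(\Z,+)$, a typical-sequences analysis of the support of $T^{\otimes N}$ together with a Salem--Spencer/Behrend-type partition controls the relevant max-entropy quantities, and one computes that for tight concise $T$ the optimum is exactly $n$ (this is already how Strassen's theory pins down the asymptotic \emph{subrank} of tight tensors). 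For (b), I would try to exploit the same diagonal structure to cover the support of $T^{\otimes N}$ by only $2^{o(N)}$ combinatorial rectangles on which $T^{\otimes N}$ restricts to low-rank pieces, kill the cross terms with a Behrend set, and thereby write $T^{\otimes N}$ as a restriction of $\langle n^{N+o(N)}\rangle$.

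The main obstacle is exactly step (b): the rectangle/laser decomposition is known to govern the asymptotic subrank of tight tensors but is \emph{not} known to govern their asymptotic rank, and there is no known reason a tight concise tensor cannot possess an unrecognised large spectral point. A more realistic intermediate target is the family $T_k$ of this excerpt: it is supported on the equipartitions of $[3k]$ into three $k$-subsets, hence is tight and concise, so the conjecture predicts $\AR(T_k)=\binom{3k}{k}$. Here the symmetry group $S_{3k}\times S_3$ (coordinate permutations together with permutations of the three tensor factors) forces the optimal distribution in the combinatorial bound to be the uniform one on equipartitions, whose three marginals are uniform on $\binom{[3k]}{k}$, a set of size exactly $\binom{3k}{k}$ --- so the combinatorial side already matches the prediction, and what remains is to convert this into an honest rank upper bound, for instance by a Schur--Weyl/representation-theoretic decomposition of $T_k^{\otimes N}$ as an $S_{3kN}$-module, which may be tractable precisely because of the symmetry that the general case lacks.
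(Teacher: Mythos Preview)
The statement you are addressing is \Cref{conj:arc}, which the paper records as an \emph{open conjecture}, not a theorem; there is no proof in the paper to compare your attempt against. The paper explicitly remarks that a proof would imply $\omega=2$ and that the question is ``probably difficult.'' What you have written is, as you yourself acknowledge, a research outline rather than a proof: your step (b) is precisely the open problem, and nothing in your sketch closes it. So there is no gap to diagnose in the usual sense---the entire upper-bound direction is the gap, and you correctly flag it as such.

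One factual correction: your claim that $\AR(W)=2$ is open for the $W$-tensor $X_1Y_1Z_2+X_1Y_2Z_1+X_2Y_1Z_1$ is false. The $W$-tensor has border rank $2$ (it is the limit $\lim_{\varepsilon\to 0}\varepsilon^{-1}\bigl((e_1+\varepsilon e_2)^{\otimes 3}-e_1^{\otimes 3}\bigr)$), and since asymptotic rank coincides with asymptotic border rank (as the paper notes in a footnote), conciseness plus this border-rank bound already give $\AR(W)=2$. The first genuinely open instances of the conjecture lie elsewhere---for example, the $2\times 2$ matrix multiplication tensor, where $\AR=4$ is equivalent to $\omega=2$.

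Your discussion of the lower bound $\AR(T)\ge n$ from conciseness is correct and standard. Your remarks on the asymptotic spectrum and on the special structure of $T_k$ are reasonable as commentary, but they do not constitute progress toward a proof.
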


That is, this posits that tensors of potentially high rank ``simplify" as much as is possible under taking Kronecker powers. This conjecture is probably difficult, as a positive answer would imply that $\omega = 2$ and a negative answer would necessarily give \emph{explicit} high-rank tensors\footnote{Amusingly, a disproof of the asymptotic rank conjecture could conceivably not tell us what the polynomial-time Turing machine producing the implied family of high-rank tensors is, but only that it exists.}, a longstanding challenge in algebraic complexity. To illustrate how poorly understood the behavior of tensor rank under powering is, as far as we are aware it is consistent with the current state of knowledge that for all $n$-dimensional tensors, $\R(T^{\otimes 2}) \le 5  n^2$ --- in other words, we do not even know if tensors simplify as much as is possible once they are squared! More embarrassingly yet, we are not aware of \emph{any} efficiently computable map from the space of $n$-dimensional tensors to the space of $n^2$-dimensional tensors whose image contains tensors of superlinear (i.e.~$\omega(n^2)$) rank. Finally, we are unaware of any example of tensor with (border) rank greater than $n$ but with asymptotic rank $n$. If no such example exists, then the moral opposite of the asymptotic rank conjecture is true!

\subsection{Results}
Our main result is the following:

\begin{theorem}\label{thm:main}
	For any $k \in \N$ and $\varepsilon > 0$, there is a randomized algorithm for balanced tripartitioning with runtime
	
	\[\left (\frac{(\AR(T_k)+ \varepsilon) \cdot 27^k}{\binom{3k}{k} \binom{2k}{k}} \right )^{n/k}.\]
\end{theorem}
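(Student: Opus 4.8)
The plan is to reduce an instance of balanced tripartitioning on the universe $[3n]$ to the evaluation of a high Kronecker power of $T_k$, and then invoke the recursive-algorithm consequence of asymptotic rank to evaluate that power efficiently. First I would partition $[3n]$ into $n/k$ blocks of size $3k$ each (assume $k \mid n$; the general case follows by padding). The key observation is that $T_k^{\otimes (n/k)}$ is, after relabeling coordinates block by block, exactly the tensor whose $(\mathcal{A}, \mathcal{B}, \mathcal{C})$ coefficient counts the number of ways to pick, in each block, a $k$-subset triple covering that block — i.e. $T_k^{\otimes(n/k)} = \sum X_{\mathcal A} Y_{\mathcal B} Z_{\mathcal C}$ over tuples $\mathcal A = (A_1,\dots,A_{n/k})$ of $k$-subsets, one per block, and similarly for $\mathcal B, \mathcal C$, with the constraint that $A_i \cup B_i \cup C_i$ is the $i$-th block for every $i$. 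So the support of $T_k^{\otimes(n/k)}$ is indexed by "balanced-per-block" tripartitions of $[3n]$.

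Next I would set up the evaluation. Given $\sF_1, \sF_2, \sF_3 \subseteq \binom{[3n]}{n}$, I want to decide whether some $S_i \in \sF_i$ have $S_1 \cup S_2 \cup S_3 = [3n]$. A subset $S \in \binom{[3n]}{n}$ restricted to the $n/k$ blocks gives a composition of $n$ into parts summing blockwise; the ones that intersect every block in exactly $k$ elements are exactly the "block-balanced" ones, and there are $\binom{3k}{k}^{n/k}$ of those. For a block-balanced choice in each $\sF_i$, whether it covers $[3n]$ is detected by plugging the indicator vectors of the block-restrictions into the $X$, $Y$, $Z$ slots of $T_k^{\otimes(n/k)}$ and checking positivity of the scalar output. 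To handle non-block-balanced sets and to get the exact count right I would introduce a random diagonal change of basis (or random scaling / a Schwartz–Zippel-style random substitution) so that the nonzero monomial we care about is not cancelled; this is the source of the randomization in the statement. Summing the $T_k$-evaluations over all block-balanced triples drawn from $\sF_1, \sF_2, \sF_3$ solves the problem.

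The running time comes from two factors multiplied together. By conciseness of $T_k$ and the asymptotic-rank-to-algorithm implication quoted in the excerpt, $T_k^{\otimes (n/k)}$ can be evaluated on any input in $O\big((\AR(T_k)+\varepsilon)^{n/k}\big)$ arithmetic operations; and we must do this for every block-balanced triple, but the standard trick (fixing one block-balanced pattern at a time, or rather organizing the sum as a single large tensor contraction) shows the relevant overhead per "coset" is governed by the number of block patterns one actually ranges over, which after the dust settles contributes a factor $\big(27^k / (\binom{3k}{k}\binom{2k}{k})\big)^{n/k}$: the $27^k = (3\cdot 3 \cdot 3)^k$ counts all assignments of block elements to the three sets, $\binom{3k}{k}\binom{2k}{k}$ counts the block-balanced ones, and their ratio is the per-block multiplicative saving already built into $T_k$ versus brute force. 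Combining, the total is $\big((\AR(T_k)+\varepsilon)\cdot 27^k / (\binom{3k}{k}\binom{2k}{k})\big)^{n/k}$, with polynomial factors in $n$ absorbed into $\varepsilon$.

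The main obstacle I expect is the bookkeeping in the second paragraph: matching the support of $T_k^{\otimes(n/k)}$ precisely to block-balanced tripartitions and arguing that the randomized substitution makes "the covering monomial survives" equivalent to "the scalar is nonzero", without inadvertently introducing error from the many monomials of $T_k^{\otimes(n/k)}$ or losing the exact combinatorial correspondence when sets in $\sF_i$ are not block-balanced — handling the latter cleanly (e.g. by restricting attention to block-balanced representatives and arguing this loses nothing for the decision problem, or by an inclusion–exclusion over block profiles) is where the real care is needed. The asymptotic-rank input and the counting of block patterns are comparatively routine once the reduction is set up correctly.
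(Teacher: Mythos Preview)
There is a genuine gap: you are missing the random permutation step, which is the mechanism that actually produces the factor $\bigl(27^k/(\binom{3k}{k}\binom{2k}{k})\bigr)^{n/k}$.

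Your identification of the support of $T_k^{\otimes r}$ with block-balanced tripartitions is correct, and so is the use of Schwartz--Zippel to test whether the zeroed-out tensor is identically zero. But a solution to a general instance of balanced tripartitioning need not be block-balanced with respect to your fixed partition into $3k$-blocks, so ``restricting attention to block-balanced representatives and arguing this loses nothing for the decision problem'' is simply false: it can lose the only solution. Your alternative suggestion of inclusion--exclusion over block profiles would require evaluating Kronecker products of \emph{unbalanced} covering tensors (with block sizes $(a,b,c)$, $a+b+c=3k$, $a,b,c$ not all equal to $k$), and you have no control over their asymptotic ranks in terms of $\AR(T_k)$.

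The paper's fix is to draw $\sigma\in\mathfrak{S}_{3n}$ uniformly at random, replace each $\sF_i$ by $\sigma(\sF_i)\cap X$ where $X$ is the set of block-balanced $n$-subsets, and then do a single evaluation of $T_k^{\otimes r}$ in time $(\AR(T_k)+\varepsilon)^{r}$. Any fixed tripartition becomes block-balanced under $\sigma$ with probability
\[
p=\frac{\bigl(\binom{3k}{k}\binom{2k}{k}\bigr)^{r}}{\binom{3n}{n}\binom{2n}{n}},
\]
so repeating $1/p$ times gives constant success probability, and $1/p\cdot\poly(n)\le \bigl(27^k/(\binom{3k}{k}\binom{2k}{k})\bigr)^{r}$ is exactly the overhead in the theorem. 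Your paragraph explaining this factor as ``the per-block multiplicative saving already built into $T_k$ versus brute force'' does not correspond to any algorithmic step and does not yield a bound. The Schwartz--Zippel randomness is a secondary detail (unnecessary in characteristic zero); the essential randomization is the permutation, which you do not mention.
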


By a reduction from \Cref{prob:bst} to \Cref{prob:setcov} similar to that of \cite[Section 4]{bk}, we then obtain the following.
\begin{corollary}\label{cor:scres}
	If the asymptotic rank conjecture is true, then for every fixed $\varepsilon > 0$ and $s \in \N$, there is a randomized algorithm for $s$-set cover with runtime $(3/2^{2/3}+\varepsilon)^n$.
\end{corollary}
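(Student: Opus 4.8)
The plan is to derive \Cref{cor:scres} from \Cref{thm:main} in three moves: evaluate $\AR(T_k)$ under the asymptotic rank conjecture, push the running time of \Cref{thm:main} to its limit as $k\to\infty$, and compose with the standard, essentially universe-preserving reduction from $s$-set cover to balanced tripartitioning.

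\emph{Step 1: computing $\AR(T_k)$ under \Cref{conj:arc}.} The key observation is that, since $\abs{S}+\abs{T}+\abs{U}=3k=\abs{[3k]}$, the condition $S\cup T\cup U=[3k]$ holds exactly when $(S,T,U)$ is an \emph{ordered partition} of $[3k]$ into three blocks of size $k$; hence the support of $T_k$ is precisely the set of such partitions, of which there are $\tfrac{(3k)!}{(k!)^3}=\binom{3k}{k}\binom{2k}{k}$. Conciseness --- equivalently, injectivity of the three flattenings --- is then immediate: in the $X$-flattening, $X_S$ is sent to the sum of the monomials $Y_TZ_U$ over the pairs $(T,U)$ completing $S$ to an ordered partition, and since such a pair determines $S=[3k]\setminus(T\cup U)$, the images of distinct $X_S$'s are supported on disjoint sets of monomials and are therefore linearly independent; the $Y$- and $Z$-flattenings follow by the symmetry of $T_k$ under permuting the roles of $S,T,U$. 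For tightness, choose weights $w_1,\dots,w_{3k}$ with pairwise distinct subset sums (say $w_i=2^i$) and put $f=g=h$ with $f(S)=3\sum_{i\in S}w_i-\sum_{i=1}^{3k}w_i$: these are injective on $\binom{[3k]}{k}$ and satisfy $f(S)+g(T)+h(U)=0$ on every ordered partition. Since a concise tensor has asymptotic rank at least its dimension, \Cref{conj:arc} gives $\AR(T_k)=\dim T_k=\binom{3k}{k}$.

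\emph{Step 2: the limit $k\to\infty$ and the reduction to set cover.} Substituting $\AR(T_k)=\binom{3k}{k}$ into \Cref{thm:main} gives a randomized algorithm for \Cref{prob:bst} on a universe of size $3n$ running in time $\big((\binom{3k}{k}+\eta)\,27^k/(\binom{3k}{k}\binom{2k}{k})\big)^{n/k}$; as $\eta\to0$ the base tends to $27^k/\binom{2k}{k}$, which by Stirling's formula $\binom{2k}{k}=(1+o(1))\,4^k/\sqrt{\pi k}$ equals $(1+o(1))(27/4)^k\sqrt{\pi k}$, so that its $k$-th root tends to $27/4=(3/2^{2/3})^3$ as $k\to\infty$. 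Consequently, given $\delta>0$, by first choosing $k$ large and then $\eta$ small, \Cref{prob:bst} on a universe of size $m$ is solved in time $\big((3/2^{2/3})^3+\delta\big)^{m/3}=(3/2^{2/3}+\delta')^m$ with $\delta'\to0$ as $\delta\to0$. It remains to reduce $s$-set cover on $[N]$ to an instance of \Cref{prob:bst} on a universe of size $(1+o(1))N$ --- splitting a hypothetical cover into three roughly equal groups, encoding each group by one set in an auxiliary family with dummy elements used to enforce the size constraint of \Cref{prob:bst}, and iterating over the polynomially many candidate cover sizes, exactly as in \cite[Section 4]{bk} --- so that the two problems have the same answer. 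Composing the two reductions produces the claimed randomized $(3/2^{2/3}+\varepsilon)^N$-time algorithm for $s$-set cover.

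I expect this last reduction to be the main obstacle. The universe may be inflated by at most a $1+o(1)$ factor, since any constant factor strictly larger than $1$ would push the exponent above $3/2^{2/3}$; at the same time the three selected sets must be forced to have equal size and to cover everything, and one must verify that constructing the (potentially large) auxiliary families, together with the bookkeeping over cover sizes, does not itself cost more than $(3/2^{2/3}+\varepsilon)^N$. Everything else --- the identification of $\mathrm{supp}(T_k)$ with ordered tripartitions, the tightness and conciseness of $T_k$, and the Stirling estimate --- is routine.
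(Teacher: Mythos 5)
Your proposal follows the same route as the paper: plug $\AR(T_k)=\binom{3k}{k}$ (from conciseness, tightness, and \Cref{conj:arc}) into \Cref{thm:main}, observe that the per-universe-element cost tends to $(27/4)^{1/3}=3/2^{2/3}$ as $k\to\infty$, and compose with a universe-preserving reduction from $s$-set cover to balanced tripartitioning. Steps 1 and 2 are correct as written (your tightness witness and the limit $27^k/\binom{2k}{k}\sim(27/4)^k\sqrt{\pi k}$ both check out). The one substantive issue is that the reduction you defer to \cite[Section 4]{bk} and flag as ``the main obstacle'' is in fact the entire content of the paper's proof of this corollary, and your sketch leaves out the two points on which it hinges. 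First, a cover's three groups need not be disjoint, so one must pass to the downward closure $\sF'=\cup_{X\in\sF}2^X$ and look for a \emph{partition} instead of a cover; without this the instance does not map to \Cref{prob:bst} at all. Second, the ``auxiliary families'' are, for each $m\le t$, the unions of $m$ pairwise disjoint members of $\sF'$ --- and the only reason these can be built within budget is that one truncates to unions of size at most $n/3+s$, so the dynamic program costs $\binom{n}{n/3}\cdot\poly(n)=(3/2^{2/3}+o(1))^n$; enumerating all unions would cost $2^n\poly(n)$ and destroy the bound. Exact balance is then achieved not with dummy elements but by guessing a set $S$ of size $3s$ together with a tripartition $S=S_1\sqcup S_2\sqcup S_3$ and deleting $S_i$ from the sets in the $i$-th family, which keeps the universe at $n-O(1)$ and adds only a $\poly(n)$ factor of guesses. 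With these three points made explicit your argument is complete; as it stands, the obligations you yourself flagged are exactly the ones that carry the proof.
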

For context, \cite{bk} showed that if the asymptotic rank conjecture is true, then for any $s \in \N$ there is a randomized algorithm for $s$-set cover with runtime $1.99999^n$ (see \cite[Item 5 of Section 1.2]{bk}).

We now highlight some more-or-less immediate consequences of \Cref{thm:main}.

\begin{corollary}\label{cor:arcor}
	If the set cover conjecture is true, then for every $\varepsilon > 0$, for all sufficiently large $k$,  $\AR(T_k) > (8-\varepsilon)^k$.
\end{corollary}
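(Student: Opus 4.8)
The plan is to prove the contrapositive, combining \Cref{thm:main} with the reduction from set cover to balanced tripartitioning that also underlies \Cref{cor:scres}. So suppose the conclusion fails: there is a fixed $\varepsilon_0 > 0$ for which $\AR(T_k) \le (8-\varepsilon_0)^k$ holds for infinitely many $k$. I will show that \Cref{conj:scc} is false.

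First I would unpack the runtime in \Cref{thm:main}. The relevant arithmetic fact is that $\binom{3k}{k}\binom{2k}{k} = \frac{(3k)!}{(k!)^3}$, which by Stirling's approximation equals $27^k / \Theta(k)$; hence $\frac{27^k}{\binom{3k}{k}\binom{2k}{k}} = \Theta(k)$ is polynomial in $k$. Taking the error parameter of \Cref{thm:main} to be, say, $1$, and using $\AR(T_k) \le (8-\varepsilon_0)^k$, the resulting runtime for balanced tripartitioning is at most
\[
\bigl((8-\varepsilon_0)^k + 1\bigr)^{n/k} \cdot \Theta(k)^{n/k} \;\le\; (8-\varepsilon_0)^n \cdot \bigl(2\,\Theta(k)\bigr)^{n/k}.
\]
Since $\bigl(2\,\Theta(k)\bigr)^{1/k} \to 1$ as $k \to \infty$ and infinitely many values of $k$ are available, I can fix one such $k$ large enough that this bound is at most $(8-\delta)^n$ for some absolute $\delta > 0$ (for instance $\delta = \varepsilon_0/2$). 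Thus balanced tripartitioning would admit a randomized $(8-\delta)^n$-time algorithm.

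It remains to observe that such an algorithm contradicts \Cref{conj:scc} --- this is the statement anticipated in the discussion after \Cref{prob:bst} and is exactly the reduction behind \Cref{cor:scres}. Concretely, one reduces an instance of $s$-set cover on a universe of size $n$ to an instance of balanced tripartitioning on a universe of size $(1+o_s(1))\,n$, so the assumed algorithm solves $s$-set cover in time $\bigl((8-\delta)^{1/3}\bigr)^{(1+o_s(1))\,n}$. Because $(8-\delta)^{1/3} < 8^{1/3} = 2$, for all sufficiently large $s$ this is at most $2^{(1-\varepsilon')n}$ for a single $\varepsilon' > 0$; and since an instance of $s'$-set cover with $s' \le s$ is in particular an instance of $s$-set cover, the same $\varepsilon'$ works for \emph{every} $s$, contradicting \Cref{conj:scc}.

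The main obstacle is this last reduction, and in particular checking that its overhead in the universe size is a factor $1 + o_s(1)$ --- this is what makes the overhead vanish against the slack $2 - (8-\delta)^{1/3}$ as $s \to \infty$. Everything else --- invoking \Cref{thm:main}, the Stirling estimate, and the limit $(2\,\Theta(k))^{1/k} \to 1$ used to select a single good $k$ --- is routine, with the real content of the corollary residing in \Cref{thm:main}.
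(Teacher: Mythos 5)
Your proof is correct and is exactly the derivation the paper intends (the paper presents this corollary as an immediate consequence of \Cref{thm:main} without writing it out): take the contrapositive, note $27^k/\bigl(\tbinom{3k}{k}\tbinom{2k}{k}\bigr)=\Theta(k)$ so a single large $k$ with $\AR(T_k)\le(8-\varepsilon_0)^k$ yields a $(8-\delta)^n$ tripartitioning algorithm, and feed this into the reduction from the proof of \Cref{cor:scres}. One minor imprecision: in that reduction the tripartitioning universe has size $n-3s\le n$ (it shrinks rather than grows by a factor $1+o_s(1)$), so the bound $2^{(1-\varepsilon')n}$ holds for every $s$ outright and the ``sufficiently large $s$'' caveat is unnecessary.
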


\begin{corollary}\label{cor:ranklb}
	If the set cover conjecture is true, then for every $k$, \[\R(T_k) \ge 8^k \cdot \binom{3k}{k} \binom{2k}{k}/27^k \ge \frac{2}{9} \cdot 8^k \cdot k^{-1}.\]
\end{corollary}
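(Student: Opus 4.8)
The plan is to read off \Cref{cor:ranklb} from \Cref{thm:main} using two simple facts. First, $\AR(T)\le\R(T)$ for every tensor $T$: since $\R(T^{\otimes r})\le\R(T)^r$ we get $\R(T^{\otimes r})^{1/r}\le\R(T)$ for all $r$, hence $\AR(T)=\lim_{r\to\infty}\R(T^{\otimes r})^{1/r}\le\R(T)$. Second, the set cover conjecture rules out any $(8-\delta)^n$-time algorithm for balanced tripartitioning with $\delta>0$ a constant: this follows from the reduction of \cite[Section 4]{bk} (the same one used to prove \Cref{cor:scres}), used now to derive a contradiction. Indeed, via that reduction a $(8-\delta)^n$-time algorithm for balanced tripartitioning on $[3n]$ yields, for each fixed $s$, an algorithm for $s$-set cover on a universe of size $m=3n$ running in time $\poly(m)\cdot(8-\delta)^{m/3}=O(2^{(1-\delta')m})$ for some constant $\delta'=\delta'(\delta)>0$ (using $(8-\delta)^{1/3}<2$ and that the reduction's overhead is polynomial in $m$ for fixed $s$), contradicting \Cref{conj:scc}.

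Granting these, fix $k$. Plugging $\AR(T_k)\le\R(T_k)$ into \Cref{thm:main} shows that for every $\varepsilon>0$ there is a randomized algorithm for balanced tripartitioning running in time at most $\left(\frac{(\R(T_k)+\varepsilon)\cdot 27^k}{\binom{3k}{k}\binom{2k}{k}}\right)^{n/k}$. If $\R(T_k)<8^k\binom{3k}{k}\binom{2k}{k}/27^k$, then for $\varepsilon$ small enough the base of this exponential is a constant strictly less than $8^k$, so the runtime is $(8-\delta)^n$ for some constant $\delta=\delta(k)>0$, which by the previous paragraph refutes the set cover conjecture. Hence if the set cover conjecture holds then $\R(T_k)\ge 8^k\binom{3k}{k}\binom{2k}{k}/27^k$ for every $k$, which is the first displayed inequality.

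It remains to verify the elementary bound $a_k:=\binom{3k}{k}\binom{2k}{k}=\frac{(3k)!}{(k!)^3}\ge\frac{2}{9k}\,27^k=:b_k$ for all $k\ge1$, which I would prove by induction on $k$. The base case $k=1$ is the equality $6=\frac{2}{9}\cdot 27$. For the inductive step, $\frac{a_{k+1}}{a_k}=\frac{(3k+1)(3k+2)(3k+3)}{(k+1)^3}=\frac{3(3k+1)(3k+2)}{(k+1)^2}$ while $\frac{b_{k+1}}{b_k}=\frac{27k}{k+1}$, and $\frac{3(3k+1)(3k+2)}{(k+1)^2}\ge\frac{27k}{k+1}$ is equivalent to $(3k+1)(3k+2)\ge 9k(k+1)$, that is to $9k^2+9k+2\ge 9k^2+9k$, which is clear. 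Thus $a_{k+1}/a_k\ge b_{k+1}/b_k$, so $a_k\ge b_k$ propagates and $\R(T_k)\ge\frac{2}{9}\cdot 8^k\cdot k^{-1}$ follows.

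The step I would be most careful about is the interplay of quantifiers with \Cref{conj:scc} in the first paragraph: the reduction of \cite[Section 4]{bk} works for \emph{every} $s$ (with overhead that may depend on $s$ but, for each fixed $s$, is polynomial in $m$), so the hypothetical fast balanced tripartitioning algorithm gives an $O(2^{(1-\delta')m})$-time $s$-set cover algorithm for \emph{every} $s$; applying \Cref{conj:scc} with $\varepsilon:=\delta'$ then produces some $s$ for which no such algorithm exists, the desired contradiction. Everything else is routine, and all implied constants are allowed to depend on the fixed $k$, since \Cref{cor:ranklb} asserts a statement about each individual $k$.
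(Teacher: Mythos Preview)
Your argument is the intended one: the paper states \Cref{cor:ranklb} as a ``more-or-less immediate consequence'' of \Cref{thm:main} without a separate proof, and you correctly derive it from $\AR(T_k)\le\R(T_k)$, the contrapositive of \Cref{thm:main}, and the reduction underlying \Cref{cor:scres}. Your inductive verification of $\binom{3k}{k}\binom{2k}{k}\ge\frac{2}{9k}27^k$ is clean and correct.

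There is one inaccuracy to fix. You assert that ``the reduction's overhead is polynomial in $m$ for fixed $s$,'' but this is false: the dynamic-programming step in the proof of \Cref{cor:scres} (enumerating unions of size about $m/3$) already costs $\binom{m}{m/3}\cdot\poly(m)\approx (3/2^{2/3})^m\cdot\poly(m)$, which is exponential. Fortunately this does not damage your conclusion, since $3/2^{2/3}<2$; the total running time for $s$-set cover is then $O\bigl(\binom{m}{m/3}+(8-\delta)^{m/3}\bigr)\cdot\poly(m)=O(2^{(1-\delta')m})$ for some $\delta'>0$, because both bases $3/2^{2/3}$ and $(8-\delta)^{1/3}$ are strictly below $2$. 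So simply replace the phrase about polynomial overhead with the observation that the overhead is $O(c^m)$ for some constant $c<2$, and the argument goes through unchanged.
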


Before proceeding to the proofs of \Cref{thm:main,cor:scres}, we pause to make some comments. 

\begin{enumerate}

	\item Assuming \Cref{conj:arc}, $k = 11$ is the smallest value for which \Cref{thm:main} would beat the $8^n \cdot \poly(n)$-time algorithm for \Cref{prob:bst} sketched earlier. The resulting base tensor is of modest dimension $\binom{33}{11} > 10^8$. The improvement over the trivial algorithm in \cite{bk} followed from a base tensor of dimension $7$.

	\item The tensor $T_1$ arose in work of Coppersmith and Winograd \cite[Section 11]{coppersmith1987matrix}, where it was noted that if $\AR(T_1) = 3$, then $\omega = 2$. Note that \Cref{thm:main} does not conflict with this possibility. Moreover, key to \cite{coppersmith1987matrix} is the fact that the induced matching number of the supporting hypergraph of $T_k$ is nearly maximal. See \cite[Appendix A]{fu2014improved} for an exposition of this fact. In the language of \cite{cohn2005group}, these induced matchings are called \emph{uniquely solvable puzzles}.
	
	\item  \Cref{cor:arcor} shows that the set cover conjecture implies the existence of an $N$-dimensional tensor (that is, $T_k$ where $N = \binom{3k}{k})$ with asymptotic rank greater than $N^{1.08}$. On the other hand, it is known that the asymptotic rank of any $N$-dimensional tensor is at most  $N^{2\omega/3}$ \cite[Proposition 3.6]{strassen1988asymptotic}. As $1.09 < 4/3$, this does not imply that $\omega = 2$ is inconsistent with \Cref{conj:scc}.

	
	\item The best upper bound we know on $\R(T_k)$ is $8^k/2$, valid for any field with $\chr(\F) \neq 2$. \Cref{cor:ranklb} says that a tantalizingly small improvement on this would violate \Cref{conj:scc}. This rank upper bound is closely related to the Fourier algorithm sketched earlier, and is realized as follows. Let $G$ be an abelian group. Suppose that there is a function $f : \binom{[3k]}{k} \to G$ and $x \in G$ such that $f(S) + f(T) + f(U) = x$ if and only if $S, T, U$ are disjoint. We could then obtain the upper bound $\R(T_k) \le |G|$ by zeroing-out and relabeling variables in the tensor $\sum_{a, b, c \in G , a+b+c=x} X_aY_bZ_c$, which has rank $|G|$ when $\chr(\F) \neq 2$. The most obvious way to instantiate this idea is to take $G = \Z_2^{3k}$, let $f$ be the indicator vector of $S$, and let $x$ be the all-ones vector. But we can do a little better by taking $G = \Z_2^{3k-1}$, letting $f(S)$ be the the indicator vector $\sum_{s \in S} e_s$ if $1 \in S$, and otherwise $f(S) = 1^{3k-1} + \sum_{s \neq 1 \in S} e_i$, and letting $x$ be the all-ones vector.
	
	\item The proof of \Cref{thm:main} will imply that all of these results hold for any tensor with the same support as $T_k$.\footnote{Technically, one needs access to efficient arithmetic in the minimal subfield of $\F$ containing the coefficients of this tensor.}
\end{enumerate}

\section{Proofs of \Cref{thm:main} and \Cref{cor:scres}}
We first note that the tensors $T_k$ satisfy the condition in \Cref{conj:arc}.
\begin{proposition}
	$T_k$ is concise and tight.
\end{proposition}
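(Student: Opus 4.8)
The plan is to verify the two properties separately, using the observation that the support of $T_k$ is exactly the set of ordered partitions of $[3k]$ into three blocks of size $k$: since $|S| = |T| = |U| = k$ while $|S \cup T \cup U| = 3k$, the constraint $S \cup T \cup U = [3k]$ forces $S$, $T$, $U$ to be pairwise disjoint.

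For conciseness, note first that $T_k$ is symmetric under permuting its three tensor factors (a permutation of the factors just permutes the names of $S$, $T$, $U$, and the index set of partitions is permutation-invariant), so it suffices to check that one of the three flattening maps is injective --- equivalently, that the $X$-slices of $T_k$ are linearly independent. The slice attached to $X_S$ is $\sum_{T \sqcup U = [3k] \setminus S} Y_T Z_U$. As $S$ ranges over $\binom{[3k]}{k}$ these slices have pairwise disjoint supports (every monomial in the $S$-slice satisfies $T \cup U = [3k] \setminus S$, which recovers $S$), and each slice is nonzero since the $2k$-element set $[3k] \setminus S$ can always be split into two $k$-sets. Disjointly supported nonzero vectors are linearly independent, so the flattening is injective; by symmetry all three are, which is conciseness.

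For tightness I would use the characteristic-vector encoding. Write $\chi_S \in \{0,1\}^{3k}$ for the indicator vector of a $k$-set $S \subseteq [3k]$; then $(S,T,U)$ lies in the support of $T_k$ exactly when $\chi_S + \chi_T + \chi_U = \mathbf{1}$, the all-ones vector. Fix the weight vector $w = (2^1, 2^2, \dots, 2^{3k})$ and put $f(S) = \langle w, \chi_S \rangle$, $h(U) = \langle w, \chi_U \rangle$, and $g(T) = \langle w, \chi_T \rangle - \langle w, \mathbf{1} \rangle$. Then $f(S) + g(T) + h(U) = \langle w, \chi_S + \chi_T + \chi_U - \mathbf{1}\rangle = 0$ on the support. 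Moreover $\langle w, \chi_S \rangle = \sum_{s \in S} 2^s$ is the integer whose binary digits are supported on $S$, so it determines $S$; hence $f$, $h$, and $g$ (a constant shift of $\langle w, \chi_\cdot \rangle$) are all injective on $\binom{[3k]}{k}$. Composing with a bijection $\binom{[3k]}{k} \to [n]$ with $n = \binom{3k}{k}$, and using the standard basis, gives the witnesses required by the definition.

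I do not expect a real obstacle here. The only point deserving care is in the conciseness argument: one must check linear independence of the slices (equivalently, that each flattening map has trivial kernel) rather than merely noting that every variable occurs in the given presentation of $T_k$, since conciseness is required to hold in every basis, and a variable can be made to vanish under a basis change precisely when the relevant flattening is not injective.
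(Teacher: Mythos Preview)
Your proof is correct and follows the same approach as the paper: conciseness via injectivity of the flattenings (the paper phrases this as the bilinear form $\sum_{S \cup T = [3k]} X_S Y_T$ with $|T|=2k$ having full rank and calls it ``immediate''), and tightness via the indicator-vector encoding. The only cosmetic difference is that the paper uses base~$4$ rather than base~$2$, which yields an ``if and only if'' stronger than tightness requires; your base-$2$ choice suffices since only the forward implication is needed.
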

\begin{proof}
	Conciseness of $T_k$ is equivalent to the saying that the bilinear form $\sum_{S \in \binom{[3k]}{k}, T \in \binom{[3k]}{2k}, S \cup T = [3k]} X_S Y_T$ has rank $\binom{3k}{k}$. This is immediate. To see that $T_k$ is tight, define $f : \binom{[3k]}{k} \to \Z$ by mapping $S$ to $\sum_{i \in S} 4^i$. Then $f$ is injective, and moreover $f(S)+f(T)+f(U) = \sum_{i=1}^{3n} 4^i$ if and only if $S,T,U$ are in the support of $T_k$. So $f' = f - \sum_{i=1}^{3k} 4^i$ witnesses the tightness of $T_k$.
\end{proof}
 
\begin{proof}[Proof of \Cref{thm:main} ]
Given an instance of \Cref{prob:bst} on a universe of size $3n_0$ and with families $\sF_{10}, \sF_{20}, \sF_{30}$, let $n := k \lceil n_0/k \rceil$ be the smallest multiple of $k$ larger than $n_0$, and set $r := n/k$. Pick any equipartition $S_1 \sqcup S_2 \sqcup S_3=[3n] \setminus [3n_0]$. We will consider the instance of \Cref{prob:bst} on $[3n]$ with families $\sF_i := \{ X \cup S_i : X \in \sF_{i0}\}$. This yields an instance of \Cref{prob:bst} which has a solution if and only if the original one did, which is constructible in time $\poly(n)$ (remembering $k$ is a constant), and on a negligibly larger universe (since $n < n_0 + k$).

Let $X  = \{x \in \{0,1\}^{3rk} : |(x_{1+3km}, x_{2+3km}, \ldots, x_{3k+3km})| = k$ for $m = 0, \ldots, r-1 \}$, where $| \cdot |$ denotes the Hamming weight. That is, in each $x \in X$ the $r$ consecutive blocks of length $3k$ contain $k$ ones each. Note that \[T_k^{\otimes r} = \sum_{(a,b,c) \in X^3 : a \vee b \vee c = 1^{3rk}} X_aY_bZ_c.\]
	 
The algorithm works as follows. First choose a permutation $\sigma \in \mathfrak{S}_{3n}$ uniformly at random. Writing $\sigma(\sF) := \{ \sigma(Y) : Y \in \sF \}$, note that $\sigma(\sF_1), \sigma(\sF_2), \sigma(\sF_3)$ contains a tripartition if and only if $\sF_1, \sF_2, \sF_3$ did. Let $\sF_i' := \sigma(\sF_i) \cap X$. If $\{\sF_i\}$ contained no tripartition, certainly neither will $\{\sF_i'\}$. If $\{\sF_i\}$ did contain a tripartition, $\{\sF_i'\}$ also will with probability at least
\[p := \frac{(\binom{3k}{k}\binom{2k}{k})^r}{\binom{3n}{n} \binom{2n}{n}}\]
because a tripartition is sent to a uniformly random tripartition by $\sigma$, and $X$ contains $(\binom{3k}{k}\binom{2k}{k})^r$ (ordered) tripartitions.

Note that if we set all variables $X_a, Y_b, Z_c$ where $a \notin \sF_1', b \notin \sF_2', c \notin \sF_3'$ to zero, the resulting restriction of $T_k^{\otimes r}$ is identically zero if and only if $\{\sF'_i\}$ contained a tripartition. So let $Y$ be a subset of the prime field of $\F$ with $|Y| = 4$ (taking an extension if $\chr(\F) \in \{2,3\}$). Set $X_a = 0$ if $a \notin \sF_1'$ and let $X_a$ be a uniformly random element of $Y$ otherwise, and similarly for the $Y$ and $Z$ variables. By what we just said this is always zero on ``no" instances, and by the Schwartz--Zippel lemma this is nonzero with probability at least $3/4$ on ``yes" instances.\footnote{Note that the use of randomization is unnecessary if $\chr(\F) = 0$.} By assumption, this evaluation can be done using $O((\AR(T_k)+\varepsilon/2)^r)$ field operations. Because asymptotic rank is invariant under field extension \cite[Proposition 15.17]{burgisser2013algebraic}, we may assume that these field operations only involve elements in $Y$ and a constant-sized subset of the prime field of $\F$ arising in a rank decomposition of a fixed power of $T_k$. So this evaluation takes $(\AR(T_k)+\varepsilon/2)^r \cdot \poly(n)$ time.

Repeat this test $1/p$ times and output ``no" just when all tests fail. This always rejects no instances and accepts yes instances with probability at least $1-(1-3p/4)^{1/p} >e^{-3/4}$. The total time taken is
\[ (\AR(T_k)+\varepsilon/2)^r \cdot \frac{\binom{3n}{n} \binom{2n}{n}}{\binom{3k}{k}^r\binom{2k}{k}^r} \cdot \poly(n) < (\AR(T_k)+\varepsilon)^r \cdot \frac{27^{kr}}{\binom{3k}{k}^r\binom{2k}{k}^r}.\qedhere\]

\end{proof}

\begin{proof}[Proof of \Cref{cor:scres} ]
Given set family $\sF$, first construct its downwards closure $\sF' := \cup_{X \in \sF} 2^X$. This is done in time $|\sF| \cdot \poly(n)$. Note that $\sF$ contains at most $t$ sets covering $[n]$ if and only if $\sF'$ contains at most $t$ sets \emph{partitioning} $[n]$. Moreover, notice that for any partition $X_1 \sqcup \cdots \sqcup X_m = [n]$ with $|X_i| \le s$, there exists a partition $A \sqcup B \sqcup C = [m]$ such that $\lfloor n/3 \rfloor - s \le |\sqcup_{a \in A} X_a| \le \lfloor n/3 \rfloor+s$, and similarly with the sets indexed by $B$ and $C$.

Next, for each $m \le t$, compute all unions of all pairwise disjoint collections of $m$ subsets in $\sF'$, having size at most $n/3+s$. This can be done with dynamic programming in time $\binom{n}{n/3} \cdot \poly(n)$. Next remove from the resulting set family all sets of size less than $\lfloor n/3 \rfloor-s$. Let $\sF'_1, \ldots, \sF'_t$ be the resulting set families. 

By the first paragraph, it now suffices to check for every $(t_1,t_2,t_3)$ with $t_1+t_2+t_3 \le t$ if there exist $X \in \sF_{t_1}', Y \in \sF_{t_2}', Z \in \sF_{t_3}'$ partitioning $[n]$. For every $S \subset [n]$ of size $3s$, and for every partition $S = S_1 \sqcup S_2 \sqcup S_3$, for $i \in [3]$ construct  
\[\sF_{t_i, S}'' = \{X \setminus S_i : X \in \sF_{t_i}' , |X| = n/3-s+|S_i|, X \cap S = S_i\}.\]
This is all done na\"ively in time $\binom{n}{n/3} \cdot \poly(n)$. If there were sets in $\sF_{t_1}', \sF_{t_2}', \sF_{t_3}'$ partitioning $[n]$, then there exists an $S$ such that $\sF_{t_1, S}'',\sF_{t_2, S}'', \sF_{t_3, S}''$ contains a balanced tripartition of $[n]\setminus S$, and conversely. The sets $\sF_{t_i, S}''$ have equal size $\lfloor n/3 \rfloor-s$ by construction. We then call the balanced tripartitioning algorithm on the universe $[n] - S$ with set families $\sF_{t_1, S}'', \sF_{t_2, S}'', \sF_{t_3, S}''$.

The number of calls made is bounded by the number of triples $(t_1, t_2, t_3)$ and the number of choices of $S$, which are both $\poly(n)$. If the asymptotic rank conjecture is true, then by \Cref{thm:main}, for any fixed $k$ and $\delta > 0$, each call takes $((\binom{3k}{k} + \delta) 27^k/(\binom{3k}{k} \binom{2k}{k} ))^{n/(3k)} \cdot \poly(n)$ time. As the first factor approaches $(27/4)^{n/3}$ as $k \to \infty$ and $\delta \to 0$, by choosing $k$ sufficiently large and $\delta$ sufficiently small, we conclude. \qedhere 
\end{proof}

\bibliographystyle{amsalpha}
 \bibliography{refs}
\end{document}